\newtheorem{lemm}{Lemma}[section]
\newtheorem{prop}[lemm]{Proposition}
\newtheorem{defi}[lemm]{Definition}
\newcommand{\R}{\mathbb{R}}                  
\newcommand{\expec}[1]{\langle #1 \rangle}
\newcommand{\dmal}{\text{d}\mu_{\text{AL}}}
\newcommand{\mal}{\mu_{\text{AL}}}
\newcommand{\wh}[1]{\widehat{#1}}
\DeclareMathOperator{\abar}{\overline{\mathcal{A}}}
\DeclareMathOperator{\knot}{K}
\DeclareMathOperator{\cyl}{Cyl}
\DeclareMathOperator{\lnk}{Link}
\begin{document}
\title{Abelian Chern-Simons theory, Stokes' theorem, and generalized connections\tnoteref{pre}}
\tnotetext[pre]{Preprint KA-TP-09-2010}

\author[apctp_postech]{Hanno Sahlmann\fnref{workplace}}
\ead{sahlmann@apctp.org}
\fntext[workplace]{The work on this article was carried out while the author was member of the Institute for Theoretical Physics, Karlsruhe University, Karlsruhe (Germany).}

\author[erlangen]{Thomas Thiemann}
\ead{thiemann@theorie3.physik.uni-erlangen.de}
\address[apctp_postech]{Asia Pacific Center for Theoretical Physics, Pohang (Korea)\\
Department of Physics, Pohang University of Science and Technology, Pohang (Korea)}
\address[erlangen]{Institute for Theoretical Physics III, Erlangen University (Germany)\\
Max Planck Institute for Gravitational Physics, Potsdam (Germany)\\
Perimeter Institute for Theoretical Physics, Waterloo (Canada)}

\begin{abstract}
Generalized connections and their calculus have been developed
in the context of quantum gravity. Here we apply them to abelian Chern-Simons
theory. We derive the expectation values of holonomies in U(1) Chern-Simons
theory using Stokes' theorem, flux operators and generalized connections. 
A framing of the holonomy loops arises in our construction, and we
show how, by choosing natural framings, the resulting expectation values
nevertheless define a functional over gauge invariant cylindrical functions. 

The abelian theory considered in the present article is the test case for our
method. It can also be applied to the non-abelian theory. Results will be reported in a companion article.
\end{abstract}
\begin{keyword}
Abelian Chern-Simons theory \sep loop quantum gravity \sep generalized connections
\end{keyword}

\maketitle
\section{Introduction}
One of the pillars of loop quantum gravity is a sophisticated theory of 
functions on spaces of connections with compact gauge groups. It comprises
measure theory \cite{Ashtekar:1994mh}, the definition of functional derivatives \cite{Ashtekar:1994wa}, and  \emph{spin
networks}, generalizations of Wilson loop functionals. Of particular
importance to loop quantum gravity is the definition of a (Lebesgue-like)
gauge invariant measure on such spaces. It is a natural question whether
this formalism can also be applied to other gauge theories. This has been
answered affirmatively in a number of cases, such as 2d Yang-Mills theory, 
or Maxwell theory in four dimensions. With the present article we want to add
U(1) Chern-Simons theory on $\R^3$ to this list. In particular we are interested
in two (related) questions: (i) can one make sense of the path
integral for this theory as a mathematical object related to generalized
connections? and (ii) can the formalism be used to \textit{derive} expectation
values for the path integral? The point of the present article is to
affirmatively answer both questions. 

We should point out that U(1) quantum Chern-Simons theory (see for example \cite{Polychronakos:1989cd}) in and of itself is not 
too interesting. We study the abelian theory as a test case for the underlying
methods. The application we have in mind is Chern-Simons theory for a
non-abelian gauge group, which is more relevant, both from a mathematical and from a physical perspective.  In fact, there is already very interesting work on the relation between non-abelian Chern-Simons theory and loop quantum gravity: see for example \cite{Gambini:1997fn,Noui:2004iy,Freidel:2004nb}. We report our own results in this direction in the companion article \cite{Sahlmann:2011uh}. 

U(1) Chern-Simons theory on $\R^3$ is defined by the action 
\begin{equation*}
S_\text{CS}[A]=\frac{k}{4\pi} \int_{\R^3}
A\wedge \text{d}A
\end{equation*}
where $A$ is a $U(1)$ connection. The expectation values for a collection of
holonomies $h_{\alpha_i}$ along non-intersecting loops
$\alpha_1,\ldots,\alpha_n$ can be calculated in various ways to be 
\begin{equation}
\label{eq_first}
\expec{\prod_i (h_{\alpha_i})^{n_i}}=\exp\left[-\frac{\pi i}{k}\left(\sum_j
n_{j}^2\lnk(\alpha_j,\alpha_j)+2\sum_{j<l}n_jn_l\lnk(\alpha_j,\alpha_l)
\right)\right].
\end{equation}
Here  $\lnk$ denotes the Gau{\ss} linking number. One nontrivial aspect
of the above result is the appearance of the \emph{self-linking}
$\lnk(\alpha,\alpha)$ of loops which is defined as Gau{\ss} linking of $\alpha$ 
with a slightly displaced loop $\alpha'$. The displaced loop is defined using a
\emph{framing} of $\alpha$. This is the expression, in the U(1) case, of the well known fact that a choice of framing is necessary to
compute expectation values of holonomies in Chern-Simons theory. It seems to
present an obstacle to using the mathematical methods cited above, in which only unframed loops are considered. One way around it may be to use and extend the formalism of framed spin-networks \cite{Major:1995yz} which provides for framed loops.

Here we will take another route. We will give a derivation of the expectation values
\eqref{eq_first}, using techniques related to generalized connections. 
The idea is to use Stokes' theorem to replace ho\-lo\-no\-mies under the path integral by (exponentials of) integrals of the curvature over surfaces that have the holonomy loops as boundaries. Then we use the property of Chern-Simons theory
that the functional derivative of the action with respect to the connection
yields the curvature of the connection. We can thus replace the
above-mentioned curvature integrals under the path integral by functional
derivatives. Finally we observe that these functional derivatives are well
defined objects, known in the loop quantum gravity literature as \emph{flux operators} 
\cite{Ashtekar:1997eg} and can be evaluated, to yield the desired expectation
values. On the technical level, it is a connection between Gau{\ss} linking and intersection numbers observed quite some time ago
\cite{Ashtekar:1997rg} that leads to the appearance of the linking numbers in the result.  
As one can see a prominent role in this derivation is played by surfaces
that have a given loop as a boundary. Such surfaces exist for any given loop,
but they are not unique. We will demonstrate that the choice of such a surface
precisely amounts to the choice of a framing. What is more, we find that there are rules for assignments of surfaces to loops such that the induced framing \emph{only} depends on the topological properties of the loops. Formulated differently, there are ways to assign a framing to a loop which depends just on its topological properties. Such  \emph{natural} framings can be incorporated into the definition of the path integral, which, in turn becomes well defined on unframed loops. In this way, we define the
Chern-Simons path integral as a path integral\footnote{On the technical
level, what we obtain is a well defined functional on gauge invariant
cylindrical functions.} on generalized connections.

The structure of the article is as follows: In the next
section we will introduce some background material and explain our strategy. 
In section \ref{se_abelian} we carry out the calculation of the expectation
values. We describe two examples of natural framings in section \ref{se_framing}, and close with a short discussion of 
our results in section \ref{se_closing}. An appendix collects some definitions
and technical results related to the surfaces that we use.
\section{Strategy}
In the present section we will describe our strategy to
define the CS path integral. Let us fix the manifold to be
$M=\R^3$ and start by listing our ingredients.
\begin{itemize}
\item[(a)] We denote by $\abar$ the space of generalized U(1) connections
(see for example \cite{Ashtekar:1994mh}).
This is a space of distributional connections. It is compact, Hausdorff,
smooth connections are dense, and it is well suited for measure theory.

\item[(b)] We denote by 
$\mal$ the Ashtekar-Lewandowski measure \cite{Ashtekar:1994mh}, a non-degenerate, uniform measure
on the space $\abar$. 

\item[(c)] A rigorous definition of the functional
derivative $\delta/\delta A$ has been given \cite{Ashtekar:1997eg}. More precisely,
\begin{equation*}
\wh{X}_{S}=\int_S \epsilon_{cab}\frac{\delta}{\delta A_c}\text{d}x^a\wedge
\text{d} x^b
\end{equation*}
for any surface $S$ gives a well defined derivation on suitably differentiable
functionals of $\abar$. We will describe its action in more detail below. Here
we only need to mention that it has the expected
adjointness properties with respect to the scalar product induced by $\mal$:
\begin{equation*}
\wh{X}_{S}^\dagger = - \wh{X}_{S}.
\end{equation*}

\item[(d)] Stokes' theorem relates the contour integral
of an abelian  connection $A$ around a loop $\alpha$
bounding a surface $S$ (as in figure \ref{fi_unknot_seifert}) to the integral of the curvature
over the surface $S$:\footnote{Since we work on $\R^3$ we can assume that the
bundle is trivial and work with the connection as a one-form on the base
manifold.}
\begin{equation*}
\oint_{\alpha} A = \iint_S \text{d}A.
\end{equation*}
It holds for a very general class of surfaces (so called \emph{domains of integration}, see for example \cite{CWD}), and in particular for the types of surfaces 
considered in the present work. 
\begin{figure}
\centerline{\epsfig{file=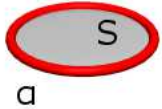, scale=0.5}}
\caption{\label{fi_unknot_seifert} A simple Wilson loop
$\alpha$ and a surface $S$ bounded by it}
\end{figure}
\end{itemize}
Our strategy is now as follows. To evaluate the expectation
value of a Wilson line, we rewrite it as a surface integral over the curvature
associated to the connection. Then we observe that
\begin{equation*}
\frac{\delta S_\text{CS}}{\delta A_c} = \frac{k}{4\pi}F_{ab}\epsilon^{abc}
\end{equation*}
which we use to replace the curvature under the CS path
integral with functional derivatives. Finally we apply
integration by parts to these derivatives. These
manipulations are well motivated but formal. We will see
however that they lead to an expression that is well
defined and can be evaluated in order to obtain the desired
expectation value. Let us be more explicit and consider the
CS expectation value of the product of a holonomy 
\begin{equation*}
h_\alpha[A]\doteq\exp\left[\oint_\alpha A\right]
\end{equation*}
with another functional $F[A]$ of the connection. We stress that the 
loop $\alpha$ can be arbitrarily knotted.\footnote{In fact, throughout the text we could also use the word \emph{knot}
in place of \emph{loop}, since all the loops are piecewise analytic and hence equivalent to polygonal loops.} 
The formal manipulations just described work out as
follows:
\begin{align*}
\expec{h_\alpha}&=\int_{\abar} F[A] h_\alpha[A] \exp(i
S_{\text{CS}}[A])\, \dmal [A]\\
&=\int_{\abar}F[A]\exp\left[\iint_S F\right]\exp(i S_{\text{CS}}[A])\, \dmal [A]\\
&=\int_{\abar}F[A]\exp\left[
\frac{4\pi}{k}\iint_{S} 
\overrightarrow{\frac{\delta}{\delta A}}
\right]
\exp(i S_{\text{CS}}[A])\, \dmal [A]\\
&=\int_{\abar}F[A]\exp\left[
-\frac{4\pi}{k}\iint_{S}
\overleftarrow{\frac{\delta}{\delta A}}\right]
\exp(i S_{\text{CS}}[A])\, \dmal [A]\\
&=\expec{\exp\left(\frac{4\pi}{k}\wh{X}_{S}\right)F[A]}
\end{align*}
where $S$ is a surface bounded by $\alpha$ (as in figure
\ref{fi_unknot_seifert}). In the last line we have
expressed our result as the expectation value of a
functional differential operator acting on the functional $F$. 
As we have indicated above, this operator is a well defined derivation on 
cylindrical functions. Therefore, if $F[A]$ is, for example, a product of 
holonomies, it can be easily evaluated. In this way, we will be able 
to recursively calculate the desired expectation values. 

Let us make an important remark regarding the strategy sketched above. It addresses a question that
the reader may have had while reading our description of the surface $S$ bounded by the loop
$\alpha$: Could it not happen that $\alpha$ does not bound \textit{any} surface? What then? Indeed, by
definition, if $\alpha$ is a non-trivial cycle of the manifold $M$, it is not the boundary of a
surface. This is why we restrict ourselves to $\R^3$ for the present paper.
What our approach can say in the case of not simply connected manifolds is an
interesting question that may be investigated elsewhere.

Still, even for the case of $M=\R^3$ it may not be immediately obvious that
for any loop, one can find a surface bounded by it. This however is
assured: there are always such surfaces. One type of surface having a given loop (or even link) as its boundary is called 
\emph{Seifert surface} of the knot or link. By definition a Seifert surface is embedded, connected, and orientable, the latter of which is important for our purposes since it ensures that the derivation $X$ is well defined. A theorem by Pontrjagin and Frankl asserts that there exists a Seifert surface for any knot or link. (An elegant construction of such a surface is due to Seifert, hence the name.) Two examples of Seifert surfaces are depicted in figure \ref{fi_seifert_example}.
Besides Seifert surfaces, we will also use the notion of a \emph{compressing disc} for a given loop. A compressing disc for a loop is a surface bounded by the loop, which is an immersion of a disc. More detailed descriptions of the surfaces used in the following, as well as some simple technical results, are compiled in the appendix.

Now that we have ensured that to a given knot or link there are surfaces bounding
it, a reasonable question is: Aren't there too many? We will see that the non-uniqueness in assigning a surface to a loop shows
up in the end result as a choice of framing for the loop. A surface bounded by the loop induces a framing on the loop\footnote{Pick any vector field on the loop that is everywhere non-zero, and transversal to the surface. Any such choice will lead to a framing, and all the framings obtained this way are equivalent.}, and it is this framing that determines the self-linking in \eqref{eq_first}.    
\begin{figure}
\centerline{\epsfig{file=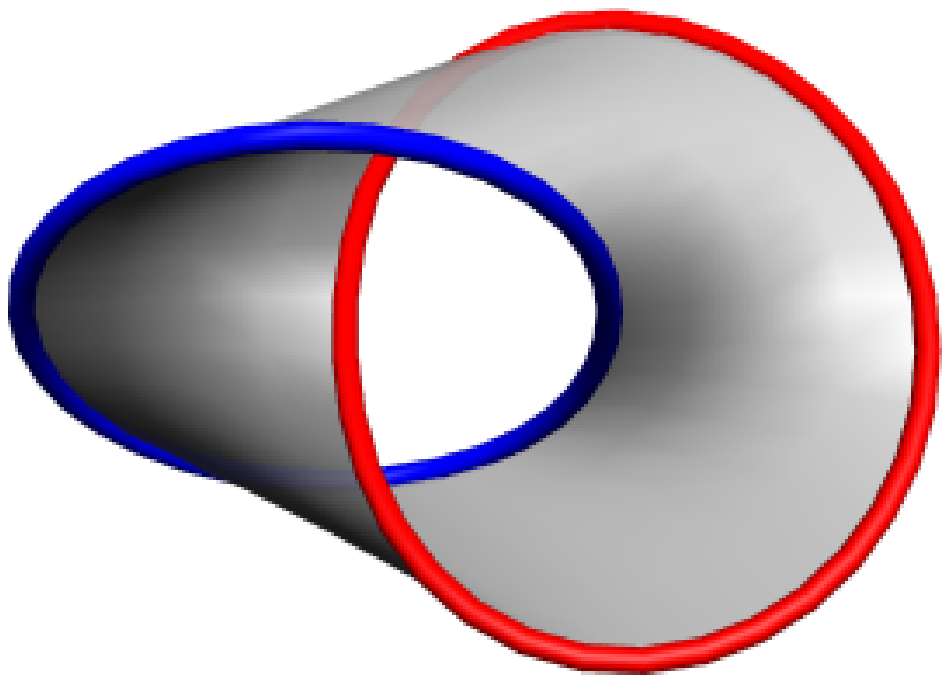, scale=0.4}
\epsfig{file=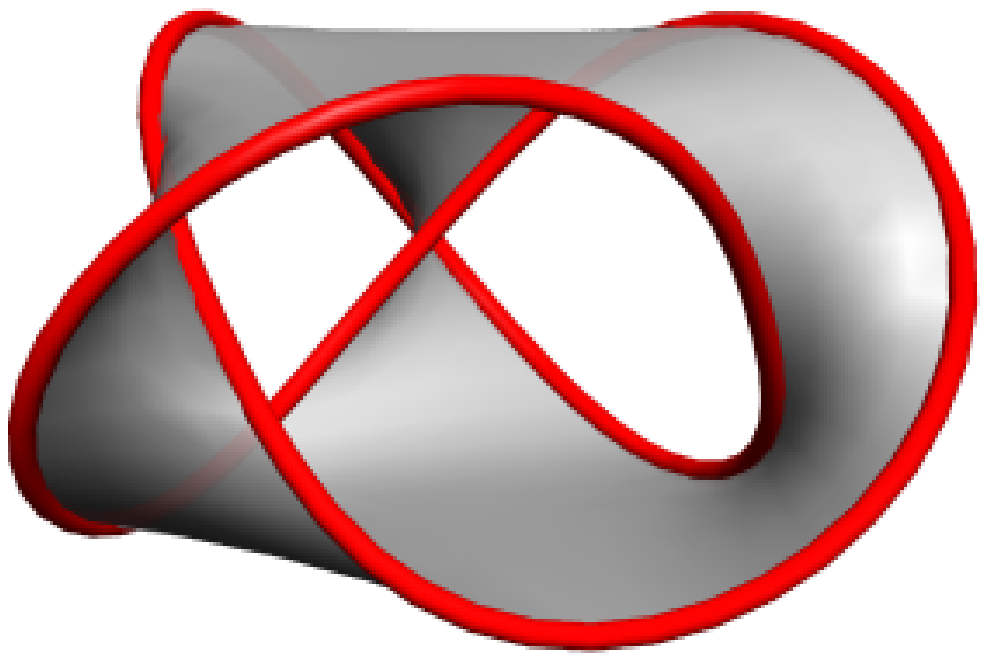, scale=0.4}}
\caption{\label{fi_seifert_example} Examples for Seifert
surfaces of knots and links: For the trefoil knot (right),
and the Hopf rings (left) (graphics created with \texttt{SeifertView} \cite{SV})}
\end{figure}

With these remarks in place, we can now move towards the actual calculation.  

\section{Calculation of expectation values}
\label{se_abelian}
In this section we will obtain the expectation value \eqref{eq_first} using language and some techniques from 
loop quantum gravity. For some technical background on the surfaces used we refer to the appendix. 
We will work with the action
\begin{equation*}
S_\text{CS}[A]=\frac{k}{4\pi} \int_{\R^3}A\wedge\text{d}A= \frac{k}{8\pi} \int_{\R^3} \epsilon^{abc}A_a\partial_bA_c\,\text{d}^3x
\end{equation*}
and $A$ is a real field. Then
\begin{equation*}
\frac{\delta S_\text{CS}}{\delta A_c} = \frac{k}{4\pi}F_{ab}\epsilon^{abc}
\end{equation*}
with $F_{ab}$ the components of curvature $F=dA=\partial_aA_b\text{d}x^a\wedge \text{d} x^b$, hence
\begin{equation}
\label{eq_der}
\epsilon_{cab}\frac{\delta S_\text{CS}}{\delta A_c}\text{d}x^a\wedge \text{d} x^b=\frac{k}{2\pi}F.
\end{equation}
The integrated functional derivative on the left hand side is a well known object in
loop quantum gravity. It can be applied to holonomies, or, more general,
cylindrical functions and acts as a derivation $\wh{X}_S$:
\begin{equation*}
\wh{X}_{S}=\int_S \epsilon_{cab}\frac{\delta}{\delta A_c}\text{d}x^a\wedge \text{d} x^b.
\end{equation*}
For the group $U(1)$, an integer $n$ and a loop $\alpha$,
\begin{equation}
\label{eq_comm}
\wh{X}_{S}h^n_{\alpha}=i n I(S,\alpha) h^n_{\alpha}
\end{equation}
where $I(S,\alpha)$ is the signed intersection number between $S$ and $\alpha$ (see appendix \ref{ap_surfaces}).
We also note that
\begin{equation*}
I(S,\alpha)=\lnk(\partial S, \alpha).
\end{equation*}
For a proof see the appendix. We rewrite the right hand side of \eqref{eq_der} using Stokes' theorem, and obtain
(formally):
\begin{equation*}
\frac{2\pi}{k}\wh{X}_{S} e^{iS_\text{CS}}= ie^{iS_\text{CS}}\int_{\partial S} A\equiv i A_{\partial S}e^{iS_\text{CS}}
\end{equation*}
Thus it is useful to introduce the operator $A_\alpha$ with commutation relations
\begin{equation}
\label{eq_comm2}
[\wh{X}_S, A_\alpha]= I(S, \alpha).
\end{equation}
Strictly speaking $A_\alpha=-i\ln h_{\alpha}$ is not well defined since 0 is in the discrete spectrum of $h_\alpha$. 
But we will not be concerned by this here, since some of the manipulations under the path integral are formal anyway. 
The commutator \eqref{eq_comm2} is zero when $\alpha=\partial S$, since $I(S, \alpha)$ as we have defined it is zero in this case.
This is the regularization chosen in loop quantum gravity, but one can make other choices, and we will
do so, here. A surface $S$ endows its boundary loop $\partial S$ with a framing in a natural way. Just choose
a smooth vector field on $\partial S$ that is transversal to $\partial S$ and nowhere tangent to the surface.
Different vector fields chosen this way are equivalent as framings, as one can easily see.
Pick one of these vector-fields, $v$, and use it to ``transport'' the boundary loop outwards: If $(\partial S)(t)$
is some parametrization of the boundary loop, then $(\partial S)_\epsilon(t):= (\partial S)(t)+\epsilon v(t)$.
Let us then define
\begin{equation}
\label{eq_reg}
[\wh{X}_S, A_{\partial S}]= \lim_{t\rightarrow 0}[\wh{X}_S, A_{\partial S_{\epsilon}}].
\end{equation}
Let us also modify our definition of the signed intersection number $I$ such that with the definition \eqref{eq_reg},
the commutation relations between the $\wh{X}$ and the connection can still be written in the
form \eqref{eq_comm}. We also note that \eqref{eq_reg} leads back to the standard regularization (in which the
commutator \eqref{eq_reg} vanishes) for surfaces without self-intersections.

Now we can calculate the the expectation values we are interested in. Let $\alpha_1\ldots \alpha_N$ be loops,
$S_1\ldots S_N$ surfaces with $\partial S_i=\alpha_i$, and  $n_1\ldots n_N$ integers.
Then formally
\begin{align*}
\expec{\prod_i h_{\alpha_i}^{n_i}}&=\int\dmal[A] h_{\alpha_N}^{n_N}\ldots h_{\alpha_2}^{n_2}
\left[\sum_{l=0}^\infty \frac{1}{l!}(in_1A_{\alpha_1})^l\right]e^{i S_{\text{CS}}}\\
&=\int\dmal[A] h_{\alpha_N}^{n_N}\ldots h_{\alpha_2}^{n_2}
\left[1+\sum_{l=1}^\infty \frac{1}{l!}(in_1A_{\alpha_1})^{l-1} \frac{2\pi n_1}{k}\wh{X}_{S_1}\right]e^{i S_{\text{CS}}}\\
&=\int\dmal[A] h_{\alpha_N}^{n_N}\ldots h_{\alpha_2}^{n_2}
\left[1-\frac{2\pi n_1}{k}\sum_{l=1}^\infty \frac{1}{l!}(in_1A_{\alpha_1})^{l-1} \overleftarrow{\wh{X}_{S_1}})\right]e^{i S_{\text{CS}}}\\
\end{align*}
where in the last step we have used that $\wh{X}_{S_1}$ is anti-symmetric with respect to
the measure $\mal$. As a consequence the derivation now acts on something that is allowed to act, and the next steps commute
it through the holonomy operators. From \eqref{eq_comm2},\eqref{eq_comm}
\begin{equation}
\label{eq_rec}
-\frac{2\pi n_1}{k}(in_1A_{\alpha_1})^{l-1}\overleftarrow{\wh{X}_{S_1}}=-\frac{2\pi n_1}{k}
\left[(l-1)in I(S_1,\partial S_1)(inA_{\alpha_1})^{l-2}+ \overleftarrow{\wh{X}_{S_1}}(inA_{\alpha_1})^{l-1}\right].
\end{equation}
with the result that we have moved the derivation past all the connection terms coming from the same loop $\alpha_1$.
Before moving the derivation further left, we want to repeat the procedure of exchanging an $A_{\alpha_1}$ for a
$\wh{X}_{S_1}$ and commuting it left through all the $A_{\alpha_1}$'s, until all of the $A_{\alpha_1}$'s have
been eliminated this way. To this end we will use the recursion formula implicit in \eqref{eq_rec}: It says that
under the manipulations described
\begin{equation}
\label{eq_rec2}
(in_1A_{\alpha_1})^l=:F(l)=(k-1)X F(l-2)+E F(l-1)
\end{equation}
where we have introduced the shortcuts
\begin{equation*}
X=-\frac{2\pi n_1^2i}{k}I(S_1,\partial S_1)\qquad E=-\frac{2\pi}{k}\overleftarrow{\wh{X}_{S_1}}.
\end{equation*}
Let us also make the definition $F(0)=1$.
Note that with the understanding that the operators $\wh{X}_{S_1}$ always stand to the left, we can
use \eqref{eq_rec2} as if $E$ is a number. \eqref{eq_rec2} is not easily solved explicitly, so we will
work with its exponential generating function
\begin{equation*}
G(w):=\sum_{l=0}^\infty\frac{1}{l!}F(l)w^l.
\end{equation*}
We note that what we are really interested in is
\begin{equation*}
G(1)=\sum_{l=0}^\infty\frac{1}{l!}F(l).
\end{equation*}
From \eqref{eq_rec2} one can derive that $G(w)$ must satisfy a differential equation:
\begin{equation*}
G'(w)=XwG(w)+EG(w)
\end{equation*}
The initial condition is $G(0)=F(0)=1$. We find the solution
\begin{equation*}
G(w)=\exp\left[wE+\frac{w^2X}{2}\right].
\end{equation*}
By evaluating at $w=1$ we find that under the path integral, and using
the manipulations we have described above,
\begin{equation*}
\sum_{l=0}^\infty\frac{1}{l!}(in_1A_{\alpha_1})^l=e^{-\frac{2\pi}{k}\overleftarrow{\wh{X}_{S_1}}}e^{-\frac{\pi n_1^2i}{k}I(S_1,\partial S_1)}.
\end{equation*}
Inserting this into the calculation of the expectation value, and pulling the $\wh{X}_{S_1}$ further to the left, we end up with
\begin{equation*}
\expec{\prod_i h_{\alpha_i}^{n_i}}=\exp\left[-\frac{\pi n_1^2i}{k}I(S_1,\partial S_1) - \frac{2\pi}{k}\sum_{j=2}^{n}I(S_1,\alpha_j)\right]
\int\dmal[A] h_{\alpha_N}^{n_N}\ldots h_{\alpha_2}^{n_2}e^{i S_{\text{CS}}}.
\end{equation*}
We can now repeat this procedure with the other holonomies, and, as a final step replace signed intersection numbers $I$ by Gau{\ss}
linking (or, in the case of $I(S, \partial S)$, self-linking ) and obtain the well known result
\begin{equation}
\label{eq_abelianresult}
\expec{\prod_i h_{\alpha_i}^{n_i}}=\exp\left[-\frac{\pi i}{k}\left(\sum_j n_{j}^2\lnk(\alpha_j,\alpha_j)+2\sum_{j<l}n_jn_l\lnk(\alpha_j,\alpha_l) \right)\right].
\end{equation}
Let us summarize the main points of this section:
\begin{itemize}
\item CS expectation values can be calculated using some formal manipulations that are however well motivated from the properties
of the objects involved.
\item Framing and self-linking enter the picture via choice of surfaces that have a given
loop as boundary.
\end{itemize} 
Some remarks: (i) The linking as defined above does not take into account the framing when
considering two loops that partially, but not completely, overlap. It is an interesting question
if this can be changed while maintaining consistency.\\
(ii) As far as we can see the formal calculation performed above and its result are insensitive to the order in which
the connection terms are exchanged for derivations. They are however not insensitive to more drastic changes. For example
one could consider using a Seifert surface for the entire link ${\alpha_1\cup\alpha_2\ldots\cup\alpha_n}$, instead
of separate surfaces for each loop. In that case, we would find the expectation value to be 1 identically. So this part
of the procedure has to be regarded as a regularization ambiguity that we have fixed in a certain way.
\section{Natural framings}
\label{se_framing}

Obviously, the result \eqref{eq_abelianresult} depends on the surfaces $S_i$ chosen in the process of calculation. These enter through the self-linking
for the loops $\alpha_i$ that they define. The formalism of loop quantum gravity is using loops (or more generally, graphs) that are
not framed. So for \eqref{eq_abelianresult} to serve as a definition of a functional on $\cyl$, one will have to make a choice
of surface, and hence of framing, for each loop. One way to proceed is to leave the framing unspecified and go over to the formalism of framed spin-networks \cite{Major:1995yz}. But there is also another option: One can search for ways in which loops get assigned surfaces (and hence framings) based on their properties. Such a choice would then be part of the definition of the path integral. A minimal requirement on such a choice is that it makes the expectation values \eqref{eq_abelianresult} invariant under diffeomorphisms, i.e., 
\begin{equation*}
\expec{\prod_i h_{\alpha_i}^{n_i}}=\expec{\prod_i h_{\phi(\alpha_i)}^{n_i}}
\end{equation*}
for any diffeomorphism $\phi$ of $\R^3$, which can be connected to the identity. A necessary and sufficient condition in terms of a map  $\alpha \longmapsto S_\alpha$ from loops to surfaces
is thus 
\begin{equation*}
I(S_\alpha,\alpha)=I(S_{\phi(\alpha)},\phi(\alpha))
\end{equation*}
for all loops $\alpha$ and all diffeomorphisms $\phi$ connected to the identity. Such a map then endows each loop with a framing such that  
\begin{equation}
\lnk(\alpha,\alpha)=\lnk(\phi(\alpha),\phi(\alpha))
\label{eq_nat}
\end{equation}
for all loops $\alpha$. Let us mention two examples for such assignments of surfaces and hence framings:
\begin{enumerate}
\item For a loop $\alpha$ choose $S_\alpha$ to be a minimal compressing disc, i.e.\ one that minimizes the number of intersections of the loop
with the surface. The minimal number of such intersections is an invariant of the loop, the knottedness $\knot(\alpha)$ \cite{Greene}, and
$\lnk(\alpha,\alpha)=\knot(\alpha)$.
\item For a loop $\alpha$ choose $S_\alpha$ to be a Seifert surface. A Seifert surface does not self-intersect and hence $\lnk(\alpha,\alpha)=0$.
\end{enumerate}
We call the framing obtained from these and similar prescriptions \emph{universal} and \emph{natural}, the former because they encompass all loops, the latter because of their covariance \eqref{eq_nat}. (The second prescription also yields what is called \emph{natural framing} in the mathematical literature.) 
Making one of these choices, \eqref{eq_abelianresult} becomes a function of unframed loops. Still it is not a functional on
$\cyl$ since not all functionals in $\cyl$ are linear combinations of multiloops. It is however well known that the gauge invariant functionals can all be written as such linear combinations and on those \eqref{eq_abelianresult} defines a well defined functional. 

We remark that for graphs that can be decomposed into loops, the above prescriptions will also give a notion of `framed graph', and hence also a notion of framing for gauge invariant functionals in $\cyl$. It would be interesting to compare this in detail to the notion of framed spin network of \cite{Major:1995yz}. We suspect that the notions will turn out to be the same. 
\section{Closing remarks}
\label{se_closing}
In the present article we have done two things. On the one hand we have explained how the Wilson loop expectation value \eqref{eq_first} of U(1) Chern-Simons theory can be interpreted as a functional over gauge invariant cylindrical functions by using a universal framing prescription, and we have given two examples of such a prescription. On the other hand we have \emph{derived} the expectation values \eqref{eq_first} using some heuristic manipulations of the path integral of the theory. These manipulations used the fact that certain functional derivatives have well defined action on Wilson loops in the form of \emph{flux operators}, well known from loop quantum gravity.  

Since U(1) Chern-Simons theory is not very interesting in itself, the potential value of the work presented above lies in the methods used. Using a universal framing prescription to describe a theory that needs framing with the mathematical methods based on generalized connections may be useful for different theories. And we have used a regularization of the (abelian) flux that differs from the one used in LQG. This may also find applications elsewhere. But, most importantly, the technique we have used to compute the expectation values \eqref{eq_first} 
seem to be applicable also to Chern-Simons theory with compact, \emph{non}-abelian gauge group. In particular, in \cite{Sahlmann:2011uh} we have started to investigate the case of SU(2), relevant to knot theory \cite{Witten:1988hf}, Euclidean gravity in three dimensions, and maybe even to four dimensional gravity via the Kodama state. 
Our findings show that again flux operators can be defined that replace holonomies under the path integral, using a non-abelian generalization of Stokes' theorem. The regularization of these operators is however much more complicated than in the abelian case. In particular, since the functional derivatives do not commute for the non-abelian group, there is an ordering ambiguity. In fact, something like it is expected since it may be the technical reason for the occurrence of quantum SU(2) in the quantum theory. In \cite{Sahlmann:2011uh}, we show that a certain natural mathematical structure in the theory of Lie, algebras -- the so-called \emph{Duflo map} -- can be used to perform the ordering, and we calculate the results for some simple cases. 
\section*{Acknowledgements}

HS gratefully acknowledges funding for the earlier stages of this work through a
Marie Curie Fellowship of the European Union. His later research was partially supported by the Spanish MICINN project No.\ FIS2008-06078-C03-03.
\begin{appendix}

\section{Surfaces, Gau{\ss} linking, and intersection numbers}
\label{ap_surfaces}
Before we start the calculation of expectation values in CS theory, we will
collect here some definitions and mathematical facts needed in the following sections.

For the moment a loop will be a smooth, compact closed one dimensional submanifold in
$\R^3$. Also we assume loops to be tame, i.e. equivalent to a polygonal knot. Later
we will have to make some adjustments to this definition owing to the fact that
loops in the formalism of loop quantum gravity that we are going to use are actually
piecewise analytic.

Let us pick an orientation of $\R^3$ and
stick with it throughout. All the surfaces we consider will be orientable,
and we assume them to be oriented, even if we do not state this explicitly
each time. We will also take the loops to be oriented.

Given a loop $\alpha$ in $\R^3$ we consider surfaces
$S$ such that $\partial S=\alpha$. We will always assume that
$S$ and $\alpha$ are oriented consistently.\footnote{That means, if $T$ is a positively oriented tangent vector to the loop
$\alpha$ at a point $p$, and $N$ is the outward normal vector (tangent to $S$) of $S$ at $p$, then $(N,T)$ is a positively oriented
basis of $T_pS$.} We will use two classes of
such surfaces, Seifert surfaces, and compressing discs. Let us give a
brief description of each.
\begin{defi}A Seifert surface for a loop $\alpha$ is an orientable, connected submanifold $S$
such that $\partial S=\alpha$.
\end{defi}
Seifert surfaces exist for any loop in $\R^3$. In fact there exist a simple
algorithm to construct one for a given loop. There is however no uniqueness, not even in a
topological sense: A loop has many different Seifert surfaces. For these and other general results on Seifert surfaces see for example \cite{M}.

For the definition of a compressing disc we follow \cite{Greene}. That reference
works with loops in $S^3$ instead of $\R^3$ but for the few things we need
from there, the compactification makes no difference.
\begin{defi}
We define a compressing disc of a loop $\alpha$ to be a map $f$ from the
two dimensional disc $D^2$ into $\R^3$ such that $f\rvert_{\partial D^2}=\alpha$
and $f\rvert_{\text{int} D^2}$ is transverse to $\alpha$.
\end{defi}
Then $f({\text{int} D^2})$
has only finitely many intersections with $\alpha$, and one can show that
for a given loop there is a minimal number of intersections that can be achieved
by varying the compressing disc. This number is called knottedness and
is an invariant of the loop. It is shown in \cite{Greene} that starting from a
compressing disc with the image of which has $n$ intersections with the loop,
one can always find a compressing disc with $n$  or less intersections, which in
addition is an immersion of $D^2$ into $\R^3$. In the following (and in the main text)
we will always assume all the compressing discs to be immersions.

Let us call an orientable, connected, two-dimensional submanifold of $\R^3$ a surface of \emph{type I}, and
an immersion of $D^2$ into $\R^3$ a surface of \emph{type II}. A notion we need
for both types of surfaces is that of the \emph{signed intersection number} between a
surface and a loop.
\begin{defi}
For a surface $S$ of type I and a loop $\alpha$, the signed intersection number is
\begin{equation*}
I(S,\alpha)=\sum_{p\in S\cap\alpha} \kappa(p)
\end{equation*}
where $\kappa(p)=+1$ if the intersection is transversal and the orientations of $\alpha$ and $S$ together
give the orientation of the one chosen on $\R^3$,\footnote{We mean: If $T$ is a positively oriented tangent vector to $\alpha$ at the intersection point $p$,
and $(v_1,v_2)$ is a positively oriented basis of $T_pS$, then $(T,v_1,v_2)$ is positively oriented in $T_p\R^3$.} $\kappa(p)=-1$ if the intersection is transversal and the
orientations of $\alpha$ and $S$ together
give the orientation opposite of the one $\R^3$, and zero otherwise.
\end{defi}
For a surface of type II the definition is almost the same, except that intersections
with a loop at a point of self-intersection of the surface may count multiple times.
\begin{defi}
Let $f:D^2\rightarrow \R^3$ be a surface of type II. Then we can find an open cover $\{U_J\}$ of $D^2$
such that $f(U_J)$  is a surface of type I for every $J$. Then given a loop $\alpha$ we define
\begin{equation*}
I(S,\alpha)=\sum_J I(f(U_J),\alpha).
\end{equation*}
\end{defi}
It is easily checked that this definition is independent of the open cover.

\begin{figure}
\centerline{\epsfig{file=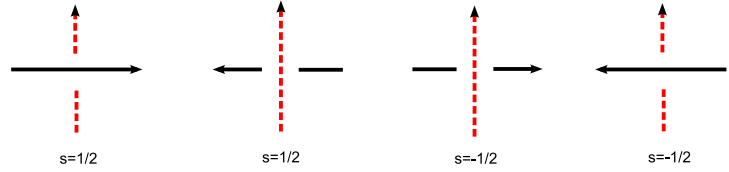, scale=0.6}}
\caption{\label{fi_linking} The types of crossings used in the calculation of the linking number of two loops.}
\end{figure}
Another notion that we will need is that of the Gau{\ss}-linking number (or simply linking number). It is a property of a pair of oriented loops. One of several
equivalent ways of defining it is the following: Given two oriented loops $\alpha_1,\alpha_2$, draw a diagram of the pair of loops. To each crossing $c$ in the diagram,
determine the quantity $s_c$ by comparison with figure \ref{fi_linking}. Then the linking number is
\begin{equation*}
\lnk(\alpha_1,\alpha_2)=\sum_{c} s_c.
\end{equation*}
It is easy to see that the linking number is independent of the diagram chosen, and that it is, in fact, a topological invariant of the loops.

\begin{figure}
\centerline{\epsfig{file=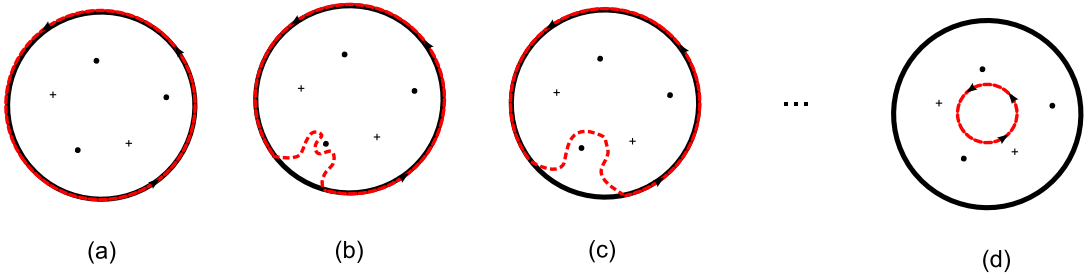, scale=0.8}}
\caption{\label{fi_proof} The shrinking procedure of the loop used in the proof of prop.\ \ref{pr_link-intersection}. Orientation of $D^2$ is out of this page,
towards the reader, intersections with positive signature are indicated by a `+', those with negative signature by a dot.}
\end{figure}
The linking number is related to the signed intersection number as follows:
\begin{prop}
\label{pr_link-intersection}
Let $S$ be a surface of type I or II, and $\alpha$ be a loop. Then
\begin{equation*}
I(S,\alpha)=\lnk(\partial S,\alpha).
\end{equation*}
\end{prop}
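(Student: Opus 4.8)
The plan is to establish the identity $I(S,\alpha) = \lnk(\partial S, \alpha)$ by reducing the general case to the case in which the loop $\alpha$ is a small circle that pierces $S$ transversally a few times, and then checking that both sides agree in that model situation. The key observation is that both $I(S,\alpha)$ and $\lnk(\partial S, \alpha)$ are \emph{homotopy invariants of $\alpha$ in the complement of $\partial S$}: if we move $\alpha$ by an ambient isotopy of $\R^3$ that never touches $\partial S$, the signed intersection number with any fixed $S$ bounded by $\partial S$ does not change (intersections are created or destroyed in cancelling pairs as $\alpha$ sweeps across $S$), and the linking number with $\partial S$ is a topological invariant anyway. So first I would make this invariance precise: a generic homotopy of $\alpha$ avoiding $\partial S$ changes $S \cap \alpha$ only through finitely many ``tangency'' moves, each of which adds or removes a pair of oppositely-signed intersection points, leaving $I(S,\alpha)$ fixed.

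Next I would use that invariance to deform. Near any intersection point $p$ of $\alpha$ with $S$ we can arrange, after an isotopy supported away from $\partial S$, that $\alpha$ consists of a collection of small disjoint round circles each linking $\partial S$ exactly once (with a sign), together with a part of $\alpha$ that is unlinked from $\partial S$ and can be shrunk to a point off $S$. Concretely this is the ``shrinking procedure'' depicted in figure \ref{fi_proof}: push the loop $\alpha$ toward $S$ so that between consecutive punctures it runs very close to $S$ without crossing it, then the portions of $\alpha$ not involved in a puncture bound small discs disjoint from $S$ and contribute nothing. What remains is, up to isotopy, a disjoint union of small meridian circles of $\partial S$, one for each signed intersection, so $\lnk(\partial S, \alpha) = \sum_p \kappa(p) = I(S,\alpha)$, using that a single positively-oriented meridian of a loop has linking number $+1$ with it. For a type II surface I would additionally note, as in the second definition, that decomposing $f$ over an open cover $\{U_J\}$ expresses $I(S,\alpha)$ as a sum over type I pieces, and since $\partial S$ itself is embedded (it is the loop $\alpha'$ that bounds the immersed disc), the type II case follows from the type I case by additivity; here the self-intersections of $S$ are exactly what allows $I(S,\partial S)$ (self-linking) to be nonzero, consistent with the remarks in the main text.

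\textbf{The main obstacle} I expect is making the ``shrinking'' deformation rigorous without hand-waving: one must check that the isotopy used to collect the intersection points into disjoint meridians really can be taken to fix $\partial S$ pointwise (so that $\lnk(\partial S, \cdot)$ is literally unchanged, not merely unchanged up to some correction), and that no intersection points are spuriously created or destroyed except in cancelling pairs. A clean way to handle this is to work with a tubular neighborhood $\partial S \times D^2$ of the boundary and a collar $S \cong \partial S \times [0,1)$ of $S$ near its boundary, and to perform all deformations inside a neighborhood of $S$ that deformation-retracts onto $S$; then the linking number with $\partial S$ is computed by the algebraic count of how $\alpha$ winds around the $D^2$-factor, which is manifestly the signed intersection count with $S$. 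An alternative, which sidesteps the picture entirely, is a Mayer--Vietoris / Alexander duality argument: $\lnk(\partial S, \alpha)$ is the image of the class of $\alpha$ in $H_1(\R^3 \setminus \partial S) \cong \Z$, and the Seifert surface (or the immersed disc, read off over the cover) provides exactly the cochain dual to this generator, so pairing $[\alpha]$ with it gives $I(S,\alpha)$ on the nose. I would present the geometric shrinking argument as the primary proof, since it matches figure \ref{fi_proof}, and mention the homological viewpoint as the reason it works.
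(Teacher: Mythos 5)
Your argument is sound in outline but follows a genuinely different route from the paper's. The paper keeps $\alpha$ fixed and shrinks the \emph{surface}: it chooses a homotopy $\beta(t)$ in the domain disc $D^2$ from $\partial D^2$ down to a small loop avoiding all intersection preimages, and tracks $I(S(t),\alpha)$ and $\lnk(f(\beta(t)),\alpha)$ simultaneously, showing both jump by the same $\mp 1$ each time $\beta(t)$ sweeps past a preimage and both vanish at the end; the type I case is not proved there at all but cited from Ashtekar--Corichi. You instead keep $S$ fixed and deform $\alpha$ in the complement of $\partial S$, using the cobordism argument that $I(S,\cdot)$ and $\lnk(\partial S,\cdot)$ are both invariant under homotopies of $\alpha$ avoiding $\partial S$, and then reduce to meridians; equivalently, both are homomorphisms $H_1(\R^3\setminus\partial S)\to\Z$ agreeing on a positive meridian. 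This buys you a self-contained proof of the type I case and makes the invariance structure explicit. Two caveats. First, a single loop cannot literally be isotoped to a \emph{disjoint union} of meridian circles; you must either connect them by arcs running parallel to $S$ without crossing it (which contribute nothing to either side) or phrase the reduction homologically, as in your alternative. Second, your type II reduction ``by additivity over the cover $\{U_J\}$'' does not work as stated: the pieces $f(U_J)$ are embedded, but their boundaries are not $\partial S$, so the type I proposition cannot be applied to each piece and summed. What the sum over the cover actually computes is the intersection of $\alpha$ with the singular $2$-chain $f_*[D^2]$, whose boundary is the cycle $\partial S$; the identity then follows from the homological characterization of $\lnk$ (your Alexander-duality remark), not from piecewise application of the type I case, so you should promote that ``alternative'' to the actual proof for type II. Finally, note that figure \ref{fi_proof} depicts the paper's shrinking of the disc domain, not your deformation of $\alpha$, so the appeal to it is misplaced.
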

\begin{proof}
For a surface of type I, the proof of the proposition is contained in the paper \cite{Ashtekar:1997rg} as appendix 1, and so we won't reproduce it here.
For surfaces of type II the argument is as follows: Let a type II surface $S$ and a loop $\alpha$ be given. Thus we have an immersion $f$ of $D^2$
into $\R^3$, the image being $S$. Let us describe the situation by depicting $D^2$, together with the preimage of the intersection points of $S$ with $\alpha$.
Thus we get a picture like figure \ref{fi_proof}(a), where we have additionally kept track of the signature $\kappa$ of the intersection. The signed intersection
number can be read of from the diagram, by subtracting the number of dots (preimages of negative signature intersections) from the number of pluses (preimages of
positive signature intersections). Now we will construct a family of new type II surfaces $S(t)$ from $S$, by suitably restricting the domain of the immersion $f$, or, in
other words by shrinking the boundary loop. To this end, we consider a homotopy of loops $\beta(t)$ in $D^{2}$, with $\beta(0)=\partial D^2$ and $\beta(1)$ a loop
that has no preimages of intersections within the disc that it bounds (such a loop is depicted in \ref{fi_proof}(d)), and such that its image under $f$
has no linking with $\alpha$. Obviously the latter two conditions are compatible, and many such loops exist.
We define the surface $S(t)$ as the
image under $f$ of the disc bounded by $\beta(t)$. Since $f$ is only an immersion, $f(\beta(t))$ for any given $t\neq 0$ is not necessarily
a loop: It can have self-intersections. One can convince oneself, however, that one can choose the homotopy in such a way that the
self-intersections of $f(\beta(t))$, if present at all, are isolated points, and such that $f(\beta(1))$ does not have any self-intersections.
We will consider such a homotopy in the following. Let us keep track of $\lnk(\alpha,\beta(t))$ and $I(S(t),\alpha)$ as we vary $t$
Obviously we start with
\begin{equation*}
\lnk(\alpha, f(\beta(0)))=\lnk(\alpha,\partial S)\qquad I(S(0),\alpha)=I(S,\alpha).
\end{equation*}
As long as $f(\beta(t))$ does not develop self-intersections or $\beta(t)$ moves past the pre-image of an intersection of $S$ with $\alpha$, nothing changes.
When $f(\beta(t))$ develops a self-in\-ter\-sec\-tion, $\lnk(\alpha,\beta(t))$ is a priori not well defined. But we will define it to be whatever one gets when one
removes the self-intersections by slightly moving $f(\beta(t))$ such that the self-intersections disappear, without however moving
$f(\beta(t))$ through $\alpha$. The result is independent of the precise way this done, since $\lnk$ can be computed using only the crossings
of $\alpha$ with $f(\beta(t))$, no self-crossings of $f(\beta(t))$. $I(S(t),\alpha)$ is obviously insensitive to self-intersections of $f(\beta(t))$.
If on the other hand $\beta(t)$ moves past the pre-image of an intersection of $S$ with $\alpha$ (as depicted in figures  \ref{fi_proof}(b),(c)), both
quantities change: For a positive signature intersection,
\begin{equation*}
I(S(t_\text{after}),\alpha)= I(S(t_\text{before}),\alpha) - 1
\end{equation*}
and, by studying figure \ref{fi_linking},
\begin{equation*}
\lnk(f(\beta(t_\text{after})),\alpha)= \lnk(f(\beta(t_\text{before})),\alpha) - 1.
\end{equation*}
Similarly, for passing a negative signature intersection,
\begin{align*}
I(S(t_\text{after}),\alpha)&= I(S(t_\text{before}),\alpha) +1\\
\lnk(f(\beta(t_\text{after})),\alpha)&= \lnk(f(\beta(t_\text{before})),\alpha) +1.
\end{align*}
At the end of the process,
\begin{equation*}
\lnk(\alpha,f(\beta(1)))=0,\qquad I(S(1),\alpha)=0.
\end{equation*}
Thus $\lnk(\alpha,\beta(t))$ and $I(S(t),\alpha)$ are the same in the end, and they change in step, thus they have been equal at the beginning as well, which
proves the proposition.
\end{proof}
A final remark of this appendix concerns the differentiability category used for the loops and surfaces. In the description above all surfaces (and implicitly also their boundaries) are smooth. In loop quantum gravity the surfaces for which the flux operators are well defined are however real analytic (or piecewise real analytic, defined in a suitable sense), and so are the loops. This is to ensure that 
there are only finitely many transversal intersections for all pairs of compact loops and surfaces, which is in turn necessary to make the analog, for non-abelian groups, of the commutator \eqref{eq_comm} well defined.  This is however not a concern for the work presented here, 
since due to the abelian nature of U(1), flux operators are well defined even on loops that intersect the underlying surface infinitely often 
as long as the Gauss linking between the loop and the boundary of the surface is finite. This is the case if we work with piecewise analytic loops and boundaries. Thus the remaining question is whether the existence of Seifert surfaces and compressing discs, as well as the results 
of this appendix continue to hold for piecewise analytic loops. It is easy to see that Seifert surfaces continue to exist in this case, and 
the notion of a compressing disc can be trivially generalized to allow for a piecewise analytic boundary. Finally it is easy to see that all properties remain intact. Thus, in the main text, we will always assume loops to be piecewise analytic and surfaces to be smooth.  
\end{appendix}


\end{document}